\newtheorem{Thm}{Theorem}
\theoremstyle{definition}
\newtheorem{Ex}{Example}
\DeclareMathOperator{\supp}{supp}
\begin{document}
\title{$q$-Generalization of the inverse Fourier transform}

\author{M. Jauregui}
\affiliation{Centro Brasileiro de Pesquisas Fisicas and National Institute of Science and Technology for Complex Systems, Rua Xavier Sigaud 150, 22290-180 Rio de Janeiro, Brazil}

\author{C. Tsallis}
\affiliation{Centro Brasileiro de Pesquisas Fisicas and National Institute of Science and Technology for Complex Systems, Rua Xavier Sigaud 150, 22290-180 Rio de Janeiro, Brazil}
\affiliation{Santa Fe Institute, 1399 Hyde Park Road, Santa Fe, New Mexico 87501, USA}
\begin{abstract}
A wide class of physical distributions appears to follow the $q$-Gaussian form, which plays the role of attractor according to a $q$-generalized Central Limit Theorem, where a $q$-generalized Fourier transform plays an important role. We introduce here a method which determines a distribution from the knowledge of its $q$-Fourier transform and some supplementary information. This procedure involves a recently $q$-generalized representation of the Dirac delta and the class of functions on which it acts. The present method conveniently extends the inverse of the standard Fourier transform, and is therefore expected to be very useful in the study of many complex systems.
\end{abstract}

\maketitle
\section{Introduction}
Nonextensive statistical mechanics \cite{Tsallis1988,GellMannTsallis2004}, a current generalization of the Boltzmann-Gibbs (BG) theory, is actively studied in diverse areas of physics and other sciences. This theory is based on a nonadditive (though extensive \cite{TsallisGellMannSato2005}) entropy characterized by the index $q$, such that $q=1$ recovers the standard BG entropy. It has been applied in systems such as cold atoms in dissipative optical lattices \cite{DouglasBergaminiRenzoni2006}, dusty plasmas \cite{LiuGoree2008}, trapped ions \cite{DeVoe2009}, spin-glasses \cite{PickupCywinskiPappasFaragoFouquet2009}, turbulence in the heliosheath \cite{BurlagaNess2009}, self-organized criticality \cite{CarusoPluchinoLatoraVinciguerraRapisarda2007}, high-energy experiments at LHC/CMS/CERN \cite{CMS1} and RHIC/PHENIX/Brookhaven \cite{PHENIX}, low-dimensional dissipative maps \cite{LyraTsallis1998}, finance \cite{Borland2002}, among others.  
 
This new theory can be advantageously based on $q$-gen\-e\-ral\-i\-za\-tions of standard mathematical concepts, such as the logarithm and exponential functions, addition and multiplication, Fourier transform (FT) and the Central Limit Theorem (CLT) \cite{UmarovTsallisSteinberg2008}. Recently, plane waves, and the representation of the Dirac delta in plane waves have been generalized as well \cite{JaureguiTsallis2010,ChevreuilPlastinoVignat2010,Mamode,PlastinoRocca}. 
Some of these generalizations open the door to interesting aspects. 
For instance, a generic analytical expression for the inverse $q$-FT for arbitrary functions and any value of $q$ does not exist \cite{Hilhorst2010}. The focus on this fact and related questions should be relevant for various applications in physics (e.g., field theory and condensed matter physics), engineering (e.g., image and signal processing), and mathematics where the standard FT and its inverse play a crucial role. 

We show in this paper that, in the $1\le q<2$ particular case and for non-negative functions (e.g., probability distributions), it is possible, by using special information that we shall detail later on, to obtain a biunivocal relation between the function and its $q$-FT. 
This property, not to be confused with an inverse $q$-FT,  is not relevant to the proof in \cite{UmarovTsallisSteinberg2008} of the $q$-generalized CLT, which yields $q$-Gaussian attractors (defined here below). 
It should be also clear that such an inverse $q$-FT is by no means necessary for the existence of attractors, which can be proved \cite{HahnJiangUmarov2010} without recourse to this (generically nonlinear) integral transform.

The $q$-FT of a non-negative integrable function $f(x)$ of the real variable $x$ is defined as \cite{UmarovTsallisSteinberg2008}
\begin{equation}
F_q[f](\xi)=\int_{-\infty}^{\infty} f(x)e_q^{i \xi x[f(x)]^{q-1}}\, dx\qquad(1\le q<3)\,,
\label{qFT}
\end{equation}
where $e_q^z$ denotes the principal value of $[1+(1-q)z]^{1/(1-q)}$ ($e_1^z=e^z$). Furthermore, the $q$-Fourier transform can also be defined for $q<1$ (see \cite{NelsonUmarov2010}).

Another important connection concerns the Dirac delta. The $\delta_q$ distribution is defined as \cite{JaureguiTsallis2010}
\begin{equation}
\delta_q(x)=\frac{2\pi}{2-q}\int_{-\infty}^{\infty}e_q^{i \xi x}\,d \xi \qquad (1\le q<2)\,,
\end{equation}
For $q=1$, this distribution is the usual plane-wave representation of Dirac delta. Also, the above integral corresponds, for an arbitrary $q$, to the $q$-FT of the constant function $f(x)=1$.
Wide families of functions exist \cite{JaureguiTsallis2010,ChevreuilPlastinoVignat2010,Mamode,PlastinoRocca} such that the $\delta_q$ distribution behaves like the Dirac delta for $q\not=1$ (see also section \ref{sec3}). 
\section{A method enabling the inversion of the $q$-Fourier transform}
Let $1\le q<2$ and $f(x)$ be a non-negative piecewise continuous function of the real variable $x$, whose support will be denoted by $\supp f$. Then, for each $y\in \supp f$, we define the function $f^{(y)}(x) = f(x+y)$ of the real variable $x$ (notice that $f^{(0)}(x)=f(x)$). The $q$-Fourier transform of $f^{(y)}(x)$ is given by
\begin {eqnarray}
F_q[f^{(y)}](\xi,y)=\int_{-\infty}^{\infty}f^{(y)}(x) e_q^{i \xi x[f^{(y)}(x)]^{q-1}}\,dx=\int_{-\infty}^{\infty}f(x+y)e_q^{i \xi x[f(x+y)]^{q-1}}\,d x\,.
\label {qFTxy}
\end {eqnarray}
Using the change of variables $z=x+y$, we have that
\begin{eqnarray*}
F_q[f^{(y)}](\xi,y)=\int_{-\infty}^{\infty}f(z)e_q^{i \xi(z-y)[f(z)]^{q-1}}\,dz=\int_{\supp f}f(z)e_q^{i \xi(z-y)[f(z)]^{q-1}}\,dz
\end{eqnarray*}
hence
\begin{eqnarray}
\label{int.permute}
\int_{-\infty}^{\infty}F_q[f^{(y)}](\xi,y)\,d \xi=\int_{-\infty}^{\infty}\int_{\supp f}f(z)e_q^{i \xi(z-y)[f(z)]^{q-1}}\,dz\,d\xi\,.
\end{eqnarray}
Assuming that the function $f(x)$ is such that it is allowed to commute the integral operators in the RHS of this equation, we have that
\begin{eqnarray*}
\int_{-\infty}^{\infty}F_q[f^{(y)}](\xi,y)\,d \xi=\int_{\supp f}f(z)\int_{-\infty}^{\infty}e_q^{i \xi(z-y)[f(z)]^{q-1}}\,d\xi\,dz=\frac{2\pi}{2-q}\int_{\supp f}f(z)\delta_q\left((z-y)[f(z)]^{q-1}\right)\,d z \, .
\end{eqnarray*}
Assuming also that the function $f(x)$ belongs to the class of functions for which the $\delta_q$ distribution behaves like the Dirac delta. Then we have
\begin{equation}
\label{step}
\int_{-\infty}^{\infty}F_q[f^{(y)}](\xi,y)\,d \xi=\frac{2\pi}{2-q}\int_{\supp f}f(z)\delta\left((z-y)[f(z)]^{q-1}\right)\,d z\,.
\end{equation}

Let us consider first the case in which $\supp f$ is a finite union of disjoint closed intervals, since in this case $\supp f$ has a boundary. Then $\supp f=\bigcup_{\ell=1}^m I_\ell$, where $I_\ell=[a_\ell,b_\ell]$ with $a_\ell<b_\ell$, and the $I_\ell$ are mutually disjoint. Then
\begin{eqnarray*}
\int_{-\infty}^{\infty}F_q[f^{(y)}](\xi,y)\,d \xi=\frac{2\pi}{2-q}\sum_{\ell=1}^m\int_{a_\ell}^{b_\ell}f(z)\delta\left((z-y)[f(z)]^{q-1}\right)\,dz=\frac{2\pi}{2-q}\sum_{\ell=1}^m\int_{a_\ell}^{b_\ell}f(z)\frac{\delta(z-y)}{[f(y)]^{q-1}}\,d z\,,
\end{eqnarray*}
where we used the property of the Dirac delta given in Appendix \ref{appendix}, since $f(x)>0$ for any $x\in\supp f$. As $y\in \supp f$ has been fixed, there exists only one $\ell_0$ such that $y\in[a_{\ell_0},b_{\ell_0}]$. Then
$$\int_{-\infty}^{\infty}F_q[f^{(y)}](\xi,y)\,d \xi=\int_{a_{\ell_0}}^{b_{\ell_0}}f(z)\frac{\delta(z-y)}{[f(y)]^{q-1}}\,d z=\frac{\gamma\pi}{2-q}[f(y)]^{2-q}\,.$$
Therefore,
\begin{equation}
f(y)=\left\{\frac{2-q}{\gamma\pi}\int_{-\infty}^{\infty}F_q[f^{(y)}](\xi,y)\,d \xi\right\}^{\frac{1}{2-q}}\,,
\label{mainresult}
\end{equation}
where $\gamma=2$ for any interior point $y\in \supp f$, and $\gamma=1$ for any boundary point $y\in \supp f$.

Let us now consider the case in which $\supp f=\mathbb{R}$. We straightforwardly obtain from Eq. (\ref{step}) that
\begin{eqnarray*}
\int_{-\infty}^{\infty}F_q[f^{(y)}](\xi,y)\,d \xi=\frac{2\pi}{2-q}\int_{-\infty}^\infty f(z)\frac{\delta(z-y)}{[f(y)]^{q-1}}\,dz=\frac{2\pi}{2-q}[f(y)]^{2-q}\,,
\end{eqnarray*}
which yields in the $(\gamma=2)$-particular case of Eq. (\ref{mainresult}).

If $\supp f$ is bounded either from above or from below, then Eq. (\ref{mainresult}) can be obtained from Eq. (\ref{step}) following a procedure similar to the one described on the above lines.
\paragraph{Particular case.} Let us remark that, in the $q \to 1$ limit, we have that
\begin{eqnarray*}
f(y)=\frac{1}{2\pi}\int_{-\infty}^{\infty}F[f^{(y)}](\xi,y)\,d \xi=\frac{1}{2\pi}\int_{-\infty}^{\infty}\int_{-\infty}^{\infty}f(x+y)e^{i \xi x}\,d x\,d\xi\,.
\end{eqnarray*}
Using the change of variables $z=x+y$, we have that
\begin{eqnarray}
f(y)=\frac{1}{2\pi}\int_{-\infty}^{\infty}\int_{-\infty}^{\infty}f(z)e^{i \xi(z-y)}\,dz\,d\xi =\frac{1}{2\pi}\int_{-\infty}^{\infty}e^{-i \xi y}\int_{-\infty}^\infty f(z)e^{i \xi z}\,dz\,d\xi=\frac{1}{2\pi}\int_{-\infty}^{\infty}e^{-i \xi y}F[f](\xi)\,d\xi\,,
\label{fourier}
\end{eqnarray}
which is the well-known expression of the inverse Fourier transform.

If we attempt to follow for $q \ne 1$ the above lines, we immediately find the difficulty that, in general, $e_q^{a+b} \ne e_q^a e_q^b$. Therefore, no generic explicit expression analogous to (\ref{fourier}) has been found for the inverse $q$-Fourier transform.

Eq. (\ref{mainresult}) says that we can obtain the function $f$ if we know the $q$-Fourier transforms of $f(x)$ \textit{and of all its translations} in the $x$-axis, which we illustrate next.

\begin{Ex}
Let $q<3$, $\beta>0$ and $x\in\mathbb{R}$. A $q$-Gaussian distribution is given (see for instance \cite{UmarovTsallisSteinberg2008,GellMannTsallis2004}) by 
\begin {equation}
G_{q,\beta}(x)=\frac{\sqrt{\beta}}{C_q}e_q^{-\beta x^2}\,,
\label {qgaussian}
\end {equation}
where $C_q$ is a normalization constant, in the sense that $C_q=\int_{-\infty}^{\infty}e_q^{-x^2}\,d x$. Let us now consider $G_{3/2,1}(x)$, then
\begin{equation}
G_{3/2,1}(y)=\left\{\frac{1}{4\pi}\int_{-\infty}^{\infty}F_q[G_{3/2,1}^{(y)}](\xi,y)\,d \xi\right\}^2
\label{fourierqg}
\end{equation}
for $1\le q<2$. The case $q=3/2$ was handled numerically (see Fig. \ref{fig.qGinverse}), since the $q$-Fourier transform of an arbitrary translation of $G_{3/2,1}(x)$ could not be obtained analitically, even Eq. (\ref{mainresult}) being an analytical result. We also verified that Eq. (\ref{fourierqg}) remains true for some other values of $q$.

\begin{figure}[t]
\centering
\includegraphics[width=0.45\textwidth,keepaspectratio]{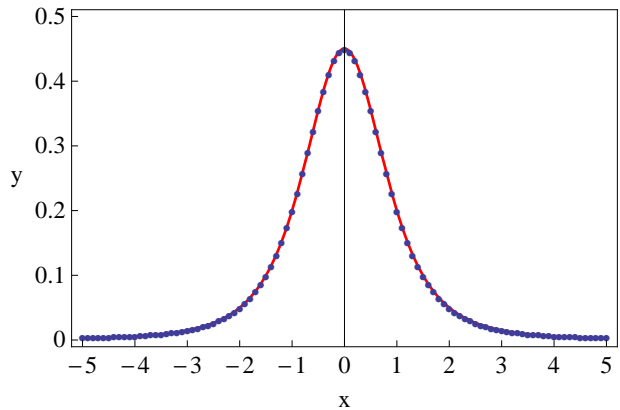}
\caption{Representation of $G_{3/2,1}(x)$. The continuous line corresponds to the analytical expression of the function; the dots were obtained by handling numerically Eqs. (\ref{qFT}) and (\ref{mainresult}).}
\label{fig.qGinverse}
\end{figure}
\end{Ex}

\begin{Ex}
Let $q\in(1,2)$, $A\geq 0$, $x\in\mathbb{R}$ and  
\begin {equation}
f_{A,q}(x)=\frac{C_q^{-1}\left[|x|^\frac{q-2}{q-1}-A\right]^\frac{1}{q-2}}{|x|^\frac{1}{q-1}\left[1+(q-1)\left(|x|^\frac{q-2}{q-1}-A\right)^\frac{2(q-1)}{q-2}\right]^\frac{1}{q-1}}
\label {hilhorst}
\end {equation}
when $A<|x|^{(q-2)/(q-1)}$, and zero otherwise (see \cite{Hilhorst2010}). This function is constructed such that $f_{0,q}(x)=G_{q,1}(x)$ and its $q$-FT does not depend on $A$. Thus, given only $F_q[f_{A,q}](\xi)=F_q[G_{q,1}](\xi)$ we cannot determine the original function $f_{A,q}$. Nevertheless, Eq. (\ref{mainresult}) states that, if we know the $q$-Fourier transform of $f_{A,q}(x)$ and of all its translations in the $x$-axis, then we can determine $f_{A,q}(x)$ (see Fig. \ref{fig.Hinverse}).
\begin{figure}[t]
\centering
\includegraphics[width=0.45\textwidth,keepaspectratio]{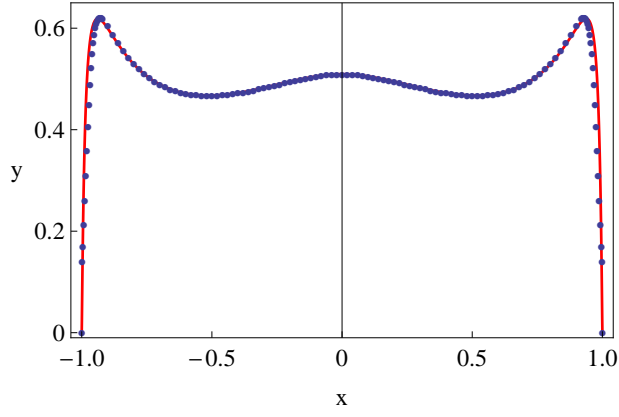}
\caption{Representation of $f_{1,5/4}(x)$. The continuous line corresponds to the analytical expression of the function; the dots were obtained by handling numerically Eqs. (\ref{qFT}) and (\ref{mainresult}). For all values of $x \in (-1,1)$ we have used  $\gamma=2$ in Eq. (\ref{mainresult}), whereas for $x=\pm 1$ we have used $\gamma=1$.}
\label{fig.Hinverse}
\end{figure}
\end{Ex}

\section{$\delta_q$ distribution as the Dirac delta}
\label{sec3}
Let us now discuss a family of functions for which the $\delta_q$ distribution behaves like the Dirac delta.

\begin{Thm}\label{L1func}
Let $q\in[1,2)$, and $f(x)$ be a function of the real variable $x$ which satisfies the following conditions:
\begin{enumerate}
\item [\textup{(i)}] $\int_{-\infty}^{\infty}|f(x)|\,d x<\infty$.
\item [\textup{(ii)}] $\int_{-\infty}^{\infty}|F[f](k)|\,d k<\infty$.
\end{enumerate}
Then, the $\delta_q$ distribution acts as the Dirac delta, i.e., 
$$\int_{-\infty}^{\infty}f(x)\delta_q(x)\,d x=f(0)\,.$$
\end{Thm}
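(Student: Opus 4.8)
The plan is to reduce the statement to the classical plane-wave representation of the Dirac delta by unwinding the definition of $\delta_q$ and exploiting the hypotheses (i)--(ii) to justify interchanging integrals. Starting from the defining formula
\begin{equation*}
\delta_q(x)=\frac{2\pi}{2-q}\int_{-\infty}^{\infty}e_q^{i\xi x}\,d\xi,
\end{equation*}
I would first record an explicit real form of the integrand. Since $e_q^{iu}$ is the principal value of $[1+(1-q)iu]^{1/(1-q)}$, for $1<q<2$ one has $|e_q^{iu}|=[1+(q-1)^2u^2]^{1/(2(1-q))}$, which is integrable in $u$ precisely because $1/(2(1-q))<-1/2$ fails only at $q=1$; more importantly $\xi\mapsto e_q^{i\xi x}$ is absolutely integrable for each fixed $x\neq 0$, so $\delta_q(x)$ is an ordinary (locally integrable, in fact bounded away from the origin) function for $q\neq 1$, and the pairing $\int f(x)\delta_q(x)\,dx$ is a genuine Lebesgue integral.

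The core computation is then
\begin{equation*}
\int_{-\infty}^{\infty}f(x)\delta_q(x)\,dx=\frac{2\pi}{2-q}\int_{-\infty}^{\infty}f(x)\int_{-\infty}^{\infty}e_q^{i\xi x}\,d\xi\,dx
=\frac{2\pi}{2-q}\int_{-\infty}^{\infty}\left(\int_{-\infty}^{\infty}f(x)e_q^{i\xi x}\,dx\right)d\xi,
\end{equation*}
where the swap of the order of integration is the step I would justify carefully. The natural tool is Fubini, but the joint integrand $f(x)e_q^{i\xi x}$ is not absolutely integrable on $\mathbb{R}^2$ (its $\xi$-integral decays only like $|\xi|^{-1/(q-1)}\cdot$ nothing uniform in $x$ near $x=0$), so one cannot apply Fubini naively. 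Instead I would introduce a convergence factor or a truncation: replace $\int_{-\infty}^\infty d\xi$ by $\int_{-R}^{R}d\xi$, interchange freely on the bounded strip $\mathbb{R}\times[-R,R]$ using (i) (which gives absolute integrability there, since $|e_q^{i\xi x}|\le 1$), obtaining $\int_{-\infty}^\infty f(x)\,\big(\int_{-R}^R e_q^{i\xi x}d\xi\big)\,dx$, and then pass to the limit $R\to\infty$. For the right-hand side, the inner integral $\int f(x)e_q^{i\xi x}\,dx$ should be recognized — using the linear-in-$\xi$ structure of $e_q^{i\xi x}$ when expanded, or the known relation between $e_q$ and ordinary exponentials via a subordination/integral representation — as expressible through the ordinary Fourier transform $F[f]$, at which point hypothesis (ii) delivers the dominating integrable majorant needed to take $R\to\infty$ and to evaluate the result as $f(0)$ exactly as in the $q=1$ case.

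The main obstacle is precisely this exchange-of-limits/Fubini justification: one must show that $\lim_{R\to\infty}\int_{-\infty}^{\infty}f(x)\,D_R^{(q)}(x)\,dx=f(0)$, where $D_R^{(q)}(x)=\frac{2\pi}{2-q}\int_{-R}^{R}e_q^{i\xi x}\,d\xi$ is a $q$-deformed Dirichlet-type kernel. I expect the cleanest route is to write $e_q^{i\xi x}$ via an integral representation that linearizes it — for $1<q<2$ one can use $[1+(1-q)iu]^{1/(1-q)}=\frac{1}{\Gamma(\frac{1}{q-1})}\int_0^\infty t^{\frac{1}{q-1}-1}e^{-t}e^{-i(q-1)ut}\,dt$ (a Gamma-function identity valid in the relevant parameter range) — so that $D_R^{(q)}$ becomes a superposition of ordinary Dirichlet kernels $\frac{\sin(Rx(q-1)t)}{x}$ against the weight $t^{1/(q-1)-1}e^{-t}$. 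Then the standard proof that the Dirichlet kernel reproduces $f(0)$ for $f$ with integrable Fourier transform applies inside the $t$-integral, and dominated convergence (justified by (i) for the near-diagonal part and by (ii), i.e. $f$ continuous at $0$ with $F[f]\in L^1$, for the tail) finishes the argument. I would flag that conditions (i) and (ii) together force $f$ to be continuous (indeed equal a.e. to a continuous function), so "$f(0)$" is unambiguous, and that this is exactly the regularity the Dirichlet-kernel limit requires.
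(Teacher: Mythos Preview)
Your core idea---the Gamma/Laplace integral representation of $e_q^{i\xi x}$---is exactly the tool the paper uses, so the approaches coincide at the decisive step. Where you diverge is in the execution: the truncation $\int_{-R}^{R}d\xi$ and the Dirichlet-kernel limiting argument are unnecessary detours. The paper's route is cleaner: insert the representation
\[
e_q^{i\xi x}=\frac{1}{\Gamma\!\left(\tfrac{1}{q-1}\right)}\int_0^{\infty}e^{-t}\,t^{\frac{2-q}{q-1}}\,e^{\,i(q-1)\xi x t}\,dt
\]
first, thereby replacing the problematic double integral by a triple one. Then two straightforward applications of Fubini suffice, each on an \emph{absolutely} integrable integrand: swap $x\leftrightarrow t$ (justified by (i), since $\int_0^\infty e^{-t}t^{(2-q)/(q-1)}dt<\infty$) to obtain
\[
\int_{-\infty}^{\infty} f(x)\,e_q^{i\xi x}\,dx=\frac{1}{\Gamma\!\left(\tfrac{1}{q-1}\right)}\int_0^{\infty}e^{-t}\,t^{\frac{2-q}{q-1}}\,F[f]\big((q-1)\xi t\big)\,dt,
\]
and then swap $\xi\leftrightarrow t$ (justified by (ii) after substituting $\zeta=(q-1)\xi t$). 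What remains is $\int F[f](\zeta)\,d\zeta=2\pi f(0)$ times an explicit Gamma integral that collapses via $\Gamma\!\left(\tfrac{2-q}{q-1}\right)\big/\big[(q-1)\Gamma\!\left(\tfrac{1}{q-1}\right)\big]=1/(2-q)$. So your worry that ``Fubini cannot be applied naively'' is resolved not by truncation but by passing to the three-variable integral where absolute integrability does hold. (Minor slip: the sign in your Gamma representation should be $e^{+i(q-1)ut}$, not $e^{-i(q-1)ut}$.)
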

\begin{proof}
If $q=1$, we have nothing to add to what is already available in the literature. If $q\in(1,2)$, we can use the Gamma representation of the $q$-exponential, namely \cite{ChevreuilPlastinoVignat2010},
\begin {equation}
e_q^{i \xi x}=\frac{1}{\Gamma\left(\frac{1}{q-1}\right)}\int_0^{\infty}e^{-t}t^\frac{2-q}{q-1}e^{i (q-1)\xi xt}\,d t\,,
\label {gamma}
\end {equation}
hence
\begin{eqnarray*}
\int_{-\infty}^{\infty}f(x)e_q^{i \xi x}\,d x= \frac{1}{\Gamma\left(\frac{1}{q-1}\right)}\int_{-\infty}^{\infty}\int_0^{\infty}f(x)e^{-t}t^{\frac{2-q}{q-1}}e^{i (q-1)\xi xt}\,dt\,dx\,.
\end{eqnarray*}
Notice that
\begin{eqnarray*}
\frac{1}{\Gamma\left(\frac{1}{q-1}\right)}\int_{-\infty}^{\infty}\int_0^{\infty}\left|f(x)e^{-t}t^\frac{2-q}{q-1}e^{i (q-1)\xi xt}\right|\,dt\,dx=\frac{1}{\Gamma\left(\frac{1}{q-1}\right)}\int_{-\infty}^{\infty}|f(x)|\,d x\int_0^{\infty}e^{-t}t^{\frac{2-q}{q-1}}\,dt\le \int_{-\infty}^{\infty}|f(x)|\,d x<\infty\,,
\end{eqnarray*}
where we have used condition (i). Then, from Fubini's theorem, we can permute the integral operators, hence
\begin{eqnarray*}
\int_{-\infty}^{\infty}f(x)e_q^{i \xi x}\,dx=\frac{1}{\Gamma\left(\frac{1}{q-1}\right)}\int_0^{\infty}e^{-t}t^{\frac{2-q}{q-1}}\int_{-\infty}^{\infty}f(x)e^{i (q-1)\xi xt}\,d x\,d t=\frac{1}{\Gamma\left(\frac{1}{q-1}\right)}\int_0^{\infty}e^{-t}t^{\frac{2-q}{q-1}}F[f]((q-1)\xi t)\,d t\,.
\end{eqnarray*}
We must now prove that
\begin {equation}
\int_{-\infty}^{\infty}\int_{-\infty}^{\infty}f(x)e_q^{i \xi x}\,d x\,d \xi=\frac{2\pi}{2-q}f(0)\,.
\label {proof}
\end {equation}
First, we should note that using the change of variables $\zeta=(q-1)\xi t$, we have that
\begin{eqnarray*}
\int_0^{\infty}\int_{-\infty}^{\infty}e^{-t}t^{\frac{2-q}{q-1}}|F[f]((q-1)\xi t)|\,d \xi\,dt&=&\int_0^{\infty}e^{-t}t^{\frac{2-q}{q-1}}\int_{-\infty}^{\infty}|F[f](\zeta)|\,\frac{d \zeta}{(q-1)t}\,d t\\
&=&\frac{1}{q-1}\int_0^{\infty}e^{-t}t^{\frac{3-2q}{q-1}}\,d t\int_{-\infty}^{\infty}|F[f](\zeta)|\,d \zeta<\infty\,,
\end{eqnarray*}
where we used condition (ii). Then, using again Fubini's theorem as well as $\zeta=(q-1)\xi t$, we have that
\begin{eqnarray*}
\int_{-\infty}^{\infty}\int_{-\infty}^{\infty}f(x)e_q^{i \xi x}\,d x\,d \xi&=&\frac{1}{(q-1)\Gamma\left(\frac{1}{q-1}\right)}\int_0^{\infty}e^{-t}t^{\frac{3-2q}{q-1}}\,d t\int_{-\infty}^{\infty}F[f](\zeta)\,d \zeta=\frac{2\pi f(0)}{(q-1)\Gamma\left(\frac{1}{q-1}\right)}\int_0^{\infty}e^{-t}t^{\frac{3-2q}{q-1}}\,d t\\
&=&\frac{2\pi f(0)\Gamma\left(\frac{2-q}{q-1}\right)}{(q-1)\Gamma\left(\frac{1}{q-1}\right)}=\frac{2\pi f(0)}{2-q}\,,
\end{eqnarray*}
which completes the proof.
\end{proof}

As a corollary, it follows from theorem \ref{L1func} that the $\delta_q$ distribution behaves like the Dirac delta for the family of functions $D_{a,b,\lambda}(x)=(ax^2+b)^\lambda$ of the real variable $x$ if $a,b>0$ and $\lambda<-1/2$. Indeed, we can straightforwardly verify that
$$\int_{-\infty}^{\infty}|D_{a,b,\lambda}(x)|\,dx\propto\int_{-\pi/2}^{\pi/2}(\cos\phi)^{-2(\lambda+1)}\,d\phi<\infty\,.$$
On the other hand (see \cite{Gradshteyn} (3.771.2))
$$F[D_{a,b,\lambda}](k)=2^{1+\lambda}\sqrt{\frac{2\pi}{a}}\left(\frac{\sqrt{ab}}{|k|}\right)^{\frac{1}{2}+\lambda}\frac{K_{\lambda+1/2}\left(\sqrt{\frac{b}{a}}|k|\right)}{\Gamma(-\lambda)}\,,$$
where $K_n(x)$ is the modified Bessel function of the second kind. We can verify, using the asymptotic expansion of $K_n(x)$ (see \cite{Gradshteyn} (8.451.6)), that $F[D_{a,b,\lambda}](k)$ is a positive integrable function. As a particular case, it follows that the $\delta_q$ distribution with $1\le q<2$ behaves like the Dirac delta for $q'$-Gaussians with $1\le q'<3$.

We analyzed numerically the behaviour of the $\delta_q$ distribution for the functions $D_{a,b,\lambda}(x)$ with $\lambda>0$. Let us remark that these functions behave smoothly at $x=0$, and diverge like a power law for $|x|\to \infty$. It results that there exists $q_{\rm max}(\lambda)\in[1,2)$ such that the $\delta_q$ distribution behaves like the Dirac delta for any $1\le q\le q_{\rm max}(\lambda)$ (see Fig. \ref{fig.divfunc}).

\begin{figure}[htp]
\centering
\includegraphics[width=0.45\textwidth,keepaspectratio]{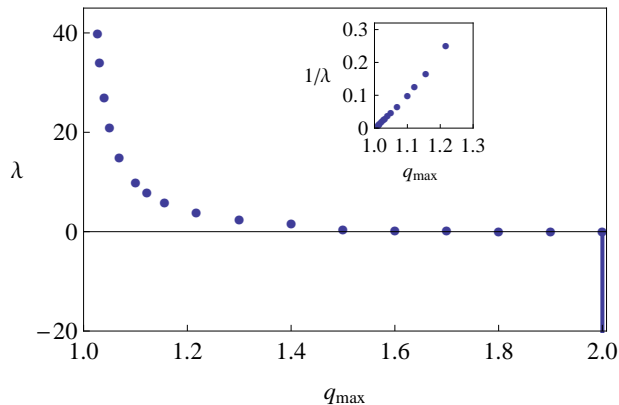}
\caption{For $1\le q\le q_{\rm max}(\lambda)$, the $\delta_q$ distribution behaves like the Dirac delta for functions of the type $D_{a,b,\lambda}(x)$, even when $\lambda>0$. The dots have been calculated numerically. The inset suggests that $q_{\rm max}\sim 1+1/\lambda$ ($\lambda\to \infty$).}
\label{fig.divfunc}
\end{figure}
\section{Conclusions}
Summarizing, we have shown that it is possible to determine the original distribution from the knowledge of the associated $q$-FT's for {\it all possible translations} $y$ of its argument, in contrast with the knowledge of {\it only} the $y=0$ case, as shown in \cite{Hilhorst2010}. The present results clarify some questions that emerge in connection with a large number of experimental, observational and computational results in physical systems which do suggest the emergence of $q$-Gaussian and $q$-exponential distributions (\cite{DouglasBergaminiRenzoni2006,LiuGoree2008,DeVoe2009,PickupCywinskiPappasFaragoFouquet2009,BurlagaNess2009,CarusoPluchinoLatoraVinciguerraRapisarda2007,CMS1,PHENIX,LyraTsallis1998,Borland2002} and others).

The exact mathematical conditions under which the exchange performed in Eq. (\ref{int.permute}) is legitimate remains as an interesting open point. Our examples reinforce however that it should be admissible for most physical cases.
\section*{Acknowledgements}
We acknowledge very fruitful discussions with E.M.F. Curado, H.J. Hilhorst, F.D. Nobre and S. Umarov. Partial financial support by Faperj and CNPq (Brazilian agencies) is acknowledged as well.
\appendix
\section{A property of the Dirac delta}
\label{appendix}
If $P(x)$ is a differentiable function of the real variable $x$, and $P(x)$ has a finite number of roots, namely $x_1,x_2,\ldots,x_n$, such that $P'(x_j)\ne 0$. Then
\begin{eqnarray*}
\delta(P(x))=\sum_{j=1}^n \frac{\delta(x-x_j)}{|P'(x_j)|}
\end{eqnarray*}
in the space of continuous functions of the real variable $x$.


\begin{thebibliography}{10}

\bibitem{Tsallis1988} C. Tsallis, J. Stat. Phys. 52 (1988) 479.

\bibitem{GellMannTsallis2004}M. Gell-Mann and C. Tsallis, eds., Nonextensive Entropy - Interdisciplinary Applications, Oxford University Press, New York, 2004; C. Tsallis, Introduction to Nonextensive Statistical Mechanics - Approaching a Complex World, Springer, New York, 2009.

\bibitem{TsallisGellMannSato2005}C. Tsallis, M. Gell-Mann and Y. Sato, Proc. Natl. Acad. Sc. USA 102 (2005) 15377; F. Caruso and C. Tsallis, Phys. Rev. E 78 (2008) 021102.

\bibitem{DouglasBergaminiRenzoni2006}P. Douglas, S. Bergamini and F. Renzoni,  Phys. Rev. Lett. 96 (2006) 110601; G.B. Bagci and U. Tirnakli, Chaos 19 (2009) 033113.

\bibitem{LiuGoree2008}B. Liu and J. Goree, Phys. Rev. Lett. 100 (2008) 055003.

\bibitem{DeVoe2009}R.G. DeVoe, Phys. Rev. Lett. 102 (2009) 063001.

\bibitem{PickupCywinskiPappasFaragoFouquet2009}R.M. Pickup, R. Cywinski, C. Pappas, B. Farago and P. Fouquet, Phys. Rev. Lett. 102 (2009) 097202.

\bibitem{BurlagaNess2009}L.F. Burlaga and N.F.  Ness, Astrophys. J. 703 (2009) 311.

\bibitem{CarusoPluchinoLatoraVinciguerraRapisarda2007}F. Caruso, A. Pluchino, V. Latora, S. Vinciguerra and A. Rapisarda, Phys. Rev. E 75 (2007) 055101(R); B. Bakar and U. Tirnakli, Phys. Rev. E 79 (2009) 040103(R); A. Celikoglu, U. Tirnakli and S.M.D. Queiros, Phys. Rev. E 82 (2010) 021124.

\bibitem{CMS1}V. Khachatryan et al (CMS Collaboration), J. High Energy Phys. 02 (2010) 041; V. Khachatryan et al (CMS Collaboration), Phys. Rev. Lett. 105 (2010) 022002.

\bibitem{PHENIX}Adare et al (PHENIX Collaboration), Phys. Rev. D 83 (2011) 052004; M. Shao, L. Yi, Z.B. Tang, H.F. Chen, C. Li and Z.B. Xu, J. Phys. G 37 (8) (2010) 085104.

\bibitem{LyraTsallis1998}M.L. Lyra and C. Tsallis, Phys. Rev. Lett. 80 (1998) 53; E.P. Borges, C. Tsallis, G.F.J. Ananos and P.M.C. de Oliveira, Phys. Rev. Lett. 89 (2002) 254103; G.F.J. Ananos and C. Tsallis, Phys. Rev. Lett. 93 (2004) 020601; U. Tirnakli, C. Beck and C. Tsallis, Phys. Rev. E 75 (2007) 040106(R); U. Tirnakli, C. Tsallis and C. Beck, Phys. Rev. E 79 (2009) 056209.

\bibitem{Borland2002}L. Borland, Phys. Rev. Lett. 89 (2002) 098701.

\bibitem{UmarovTsallisSteinberg2008} S. Umarov, C. Tsallis and S. Steinberg, Milan J. Math. 76 (2008) 307; S. Umarov, C. Tsallis, M. Gell-Mann and S. Steinberg, J. Math. Phys. 51 (2010) 033502. 

\bibitem{JaureguiTsallis2010} M. Jauregui and C. Tsallis, J. Math. Phys. 51 (2010) 063304.

\bibitem{ChevreuilPlastinoVignat2010} A. Chevreuil, A. Plastino and C. Vignat, J. Math. Phys. 51 (2010) 093502.

\bibitem{Mamode} M. Mamode, J. Math. Phys. 51 (2010) 123509.

\bibitem{PlastinoRocca} A. Plastino and M.C. Rocca, arXiv:1012.1223v1 [math-ph].

\bibitem{Hilhorst2010} H.J. Hilhorst, J. Stat. Mech. (2010) P10023.

\bibitem{HahnJiangUmarov2010}M.G. Hahn, X.X. Jiang and S. Umarov, J. Phys. A 43 (16) (2010) 165208.

\bibitem{NelsonUmarov2010} K.P. Nelson and S. Umarov, Physica A 389 (2010) 2157; K.P. Nelson and S. Umarov, arXiv:0811.3777v1 [cs.IT].

\bibitem{Gradshteyn} I.S. Gradshteyn and I.M. Ryhzik, Table of Integrals, Series and Products, Academic Press, 7 ed, 2007.

\end{thebibliography}
\end{document}